\newtheorem{thm}{Theorem}[section]
\newtheorem{lem}[thm]{Lemma}
\theoremstyle{definition}
\newtheorem{defn}[thm]{Definition}
\theoremstyle{remark}
\newtheorem{rem}[thm]{Remark}
\numberwithin{equation}{section}
\newcommand{\set}[1]{\left\{#1\right\}}
\newcommand{\Real}{\mathbb R}
\newcommand{\Natural}{\mathbb N}
\newcommand{\such}{\ | \ }
\newcommand{\nin}{n \in \Natural}
\newcommand{\prob}{\mathbb{P}}
\newcommand{\Exp}{\mathcal E}
\newcommand{\qprobu}{{\mathbb{P}_u}}
\newcommand{\expec}{\mathbb{E}}
\newcommand{\expecqu}{\expec_u}
\newcommand{\basispwf}{(\Omega, \, \F, \, \bF, \, \prob)}
\newcommand{\basis}{(\Omega, \, \bF)}
\newcommand{\basisg}{(\Omega, \, \bG)}
\newcommand{\basisp}{(\Omega, \, \bF, \, \prob)}
\newcommand{\basisgp}{(\Omega, \, \bG, \, \prob)}
\newcommand{\basisqu}{(\Omega, \, \bF, \, \qprobu)}
\newcommand{\F}{\mathcal{F}}
\newcommand{\G}{\mathcal{G}}
\newcommand{\ud}{\mathrm d}
\newcommand{\limt}{\lim_{t \to \infty}}
\newcommand{\zi}{{\Real_+}}
\newcommand{\zd}{[0, \cdot]}
\newcommand{\zo}{[0, 1)}
\newcommand{\trace}{\mathsf{trace}}
\newcommand{\cS}{{}^\mathsf{c} \kern-0.21em S}
\newcommand{\cH}{{}^\mathsf{c} \kern-0.23em H}
\newcommand{\cMM}{{}^\mathsf{c} \kern-0.19em [M, M]}
\newcommand{\pare}[1]{\left(#1\right)}
\newcommand{\bra}[1]{\left[#1\right]}
\newcommand{\dbra}[1]{[\kern-0.15em[ #1 ]\kern-0.15em]}
\newcommand{\dbraco}[1]{[\kern-0.15em[ #1 [\kern-0.15em[}
\newcommand{\dbraoc}[1]{]\kern-0.15em] #1 ]\kern-0.15em]}
\newcommand{\dbraoo}[1]{]\kern-0.15em] #1 [\kern-0.15em[}
\newcommand{\hX}{\widehat{X}}
\newcommand{\hY}{\widehat{Y}}
\newcommand{\X}{\mathcal{X}}
\newcommand{\bF}{\mathbf{F}}
\newcommand{\dfn}{\, := \,}
\newcommand{\bG}{\mathbf{G}}
\newcommand{\indic}{\mathbb{I}}
\newcommand{\num}{num\'eraire}
\begin{document}

\title{A time before which insiders would not undertake risk}%
\author{Constantinos Kardaras}%
\address{Constantinos Kardaras, Mathematics and Statistics Department, Boston University, 111 Cummington Street, Boston, MA 02215, USA.}%
\email{kardaras@bu.edu}%

\thanks{The author would like to thank an anonymous referee for constructive remarks that helped improve the paper. This work is supported in part by the National Science Foundation under grant number DMS-0908461.}
\subjclass[2000]{60G07, 60G44}%
\keywords{Enlargement of filtration; insider trading; \num \ portfolio; local martingales; random times; honest times}%

\date{\today}%

\begin{abstract}
A continuous-path semimartingale market model with wealth processes discounted by a riskless asset is considered. The \num \ portfolio is the unique strictly positive wealth process that, when used as a benchmark to denominate all other wealth, makes all wealth processes local martingales. It is assumed that the \num \ portfolio exists and that its wealth increases to infinity as time goes to infinity. Under this setting, an initial enlargement of the filtration is performed, by including the overall minimum of the \num \ portfolio. It is established that all nonnegative wealth processes, when stopped at the time of the overall minimum of the \num \ portfolio, become local martingales in the enlarged filtration. This implies that risk-averse insider traders would refrain from investing in the risky assets before that time. A partial converse to the previous result is also established in the case of complete markets, showing that the time of the overall minimum of the \num \ portfolio is in a certain sense unique in rendering undesirable the act of undertaking risky positions before it. The aforementioned results shed light to the importance of the \num \ portfolio as an indicator of overall market performance.
\end{abstract}

\maketitle

\setcounter{section}{-1}

\section{Introduction}

When modeling insider trading, one usually enlarges the ``public'' information flow by including knowledge of a non-trivial random variable, which represents the extra information of the insider, from the very beginning. (This method called \emph{initial filtration enlargement}, as opposed to \emph{progressive filtration enlargement} --- for more details, see \cite[Chapter VI]{MR1037262}.) It is then of interest to explore the effect that the extra information has on the trading behavior of the insider --- for an example, see \cite{MR2223957}. Under this light, the topic of the present paper may be considered slightly unorthodox, as we identify an initial filtration enlargement and a stopping time of the enlarged filtration (which is \emph{not} a stopping time of the original filtration) with the property that risk-averse insider traders would refrain from taking risky positions before that time. As will be revealed, this apparently ``negative'' result, though not helpful in the theory of insider trading, sheds more light to the importance of a specific investment opportunity, namely, the \num \ portfolio.

Our setting is a continuous-path semimartingale market model with $d$ asset-price processes $S^1, \ldots, S^d$. All wealth is discounted with respect to some locally riskless asset. Natural structural assumptions are imposed --- in particular, we only enforce a mild market viability condition, and allow for the existence of some discounted wealth process that will grow unconditionally as time goes to infinity. Such assumptions are satisfied in every reasonable infinite time-horizon model. In such an environment, the \num \ portfolio --- an appellation coined in \cite{LONG} --- is the unique nonnegative wealth process $\hX$ with unit initial capital such that all processes $S^i / \hX$, $i \in \set{1, \ldots, d}$, become local martingales. The \num \ portfolio has several interesting optimality properties. For instance, it maximizes expected logarithmic utility for all time-horizons and achieves maximal long-term growth --- for more information, check \cite{MR2335830}. The goal of the present paper is to add yet one more to the remarkable list of properties of the \num \ portfolio.

The original filtration  $\bF$ is  enlarged to $\bG$, which further contains information on the overall minimum level $\min_{t \in \Real_+} \hX(t)$ of the \num \ portfolio. In particular, the time $\rho$ that this overall minimum is achieved (which can be shown to be almost surely unique) becomes a stopping time with respect to $\bG$. Our first main result states that all $S^i$, $i \in \set{1, \ldots, d}$, become local martingales up to time $\rho$ under the \emph{enlarged filtration} $\bG$ and \emph{original probability} $\prob$. Note that the asset-price processes are discounted by the locally riskless wealth process, and not by the \num \ portfolio. (The latter discounting makes asset price-processes local martingales under $(\bF, \prob)$, while the former discounting makes asset price-processes, when stopped at $\rho$, local martingales under $(\bG, \prob)$.) In essence, $\prob$ becomes a risk-neutral measure for the model with enlarged filtration up to time $\rho$. An immediate consequence of this fact is that a risk-averse investor would refrain from taking risky positions up to time $\rho$, since they would result in no compensation for the risk that is being undertaken, in terms of excess return relative to the riskless account. (Note, however, that an insider can arbitrage unconditionally after time $\rho$ with \emph{no} downside risk whatsoever involved, simply by taking arbitrarily large long positions in the the \num \ portfolio immediately after $\rho$.) In effect, trading in the market occurs simply because traders do not have information about the time of the overall minimum of the \num \ portfolio. In fact, until time $\rho$, not only the \num \ portfolio, but the whole market performs badly, since the expected outcome of any portfolio at time $\rho$ is necessarily less or equal than the initial capital used to set it up.

A partial converse to the previous result is also presented. Under an extra completeness assumption on the market, it is shown that if a random time $\phi$ (satisfying a couple of technical properties) is such that $\expec [X(\phi)] \leq X(0)$ holds for any nonnegative wealth process $X$ formed by trading with information $\bF$, then $\phi$ is necessarily equal to the time of the overall minimum of the \num \ portfolio. Combined with our first main result, this clarifies the unique role of the \num \ portfolio as an indicator of overall market performance.

The structure of the remainder of the paper is simple. In Section \ref{sec: results} the results are presented, while Section \ref{sec: proof} contains the proofs.

\section{Results} \label{sec: results}

\subsection{The set-up} \label{subsec: set-up}

Let $\basispwf$ be a filtered probability space --- here, $(\Omega, \, \F, \, \prob)$ is a complete probability space and $\bF = (\F(t))_{t \in \Real_+}$ is a right-continuous filtration such that, for each $t \in \Real_+$, $\F(t) \subseteq \F$ and $\F(t)$ contains all  $\prob$-null sets of $\F$ --- in other words, $\bF$ satisfies the \emph{usual conditions}. Without affecting in any way the generality of our discussion, we shall be assuming that $\F(0)$ is trivial modulo $\prob$. Relationships involving random variables are to be understood in the $\prob$-a.s. sense; relationships involving processes hold modulo evanescence.

On $\basisp$, let $S = (S^i)_{i=1, \ldots, d}$ be a vector-valued semimartingale with continuous paths. For each $i \in \set{1, \ldots, d}$, $S^i$ represents the discounted, with respect to some baseline security, price of a liquid asset in the market. The baseline security, which we shall simply call \textsl{discounting process}, should be thought as a locally riskless account. In contrast, the other assets are supposed to represent riskier investments. We also set $S^0 \dfn 1$ to denote the wealth accumulated by the baseline locally riskless security, discounted by itself.

Starting with capital $x \in \Real_+$, and investing according to some $d$-dimensional, $\bF$-predictable and $S$-integrable strategy $\vartheta$ modeling the number of liquid assets held in the portfolio, an economic agent's discounted wealth is given by $X^{x, \vartheta} = x + \int_0^\cdot \vartheta^\top(t) \ud S(t)$. Define $\X_\bF (x)$ as the set of all processes  $X^{x, \vartheta}$ in the previous notation that remain nonnegative at all times. Furthermore, we set $\X_\bF \dfn \bigcup_{x \in \Real_+} \X_\bF(x)$. 

Below, we gather some definitions and results that have appeared previously in the literature. More information about them can be found in \cite{MR2335830} and, for the special case of continuous-path semimartingales that is considered here, in \cite[Section 4]{Kar_09_fin_add_ftap}.

\begin{defn} \label{dfn: arb first kind}
We shall say that the market allows for \textsl{arbitrage of the first kind} if there exists $T \in \Real_+$ and an $\F(T)$-measurable random variable $\xi$ with $\prob[\xi \geq 0] = 1$, $\prob[\xi > 0] > 0$, such that for all $x > 0$ one can find $X \in \X(x)$ satisfying $\prob[X(T) \geq \xi] = 1$. If the market does not allow for any arbitrage of the first kind, we say that condition NA$_1$ holds.
\end{defn}

Condition NA$_1$ is weaker than the ``No Free Lunch with Vanishing Risk'' market viability condition of \cite{MR1304434}, and is actually equivalent to the requirement that $\lim_{\ell \to \infty} \sup_{X \in \X^\bF (x)} \prob \bra{X(T) > \ell} = 0$ holds for all $x \in \Real_+$ and $T \in \Real_+$ --- see \cite[Proposition 1]{Kar_09_fin_add_ftap}. The latter boundedness-in-probability requirement is coined condition BK in \cite{MR1647282} and condition ``No Unbounded Profit with Bounded Risk'' (NUPBR) in \cite{MR2335830}.

\begin{defn} \label{dfn: smart defl, num}
A strictly positive local martingale deflator is a strictly positive process $Y$ with $Y(0) = 1$ such that $Y S^i$ is a local martingale on $\basisp$ for all $i \in \set{0, \ldots , d}$. (The last requirement is equivalent to asking that $Y X$ is a local martingale on $\basisp$ for all $X \in \X_\bF$.) A strictly positive process $\hX \in \X_\bF(1)$ will be called \textsl{\textsl{the \num} portfolio} if $\hY \dfn 1/ \hX$ is a (necessarily, strictly positive) local martingale deflator.
\end{defn}

By Jensen's inequality, it is straightforward to see that if the \num \ portfolio $\hX$ exists, then it is unique. Obviously, if the \num \ portfolio exists then at least one strictly positive local martingale deflator exists in the market. Interestingly, the converse also holds, i.e., existence of the \num \ portfolio is equivalent to existence of at least one strictly positive local martingale deflator. Furthermore, the previous are also equivalent to condition NA$_1$ holding in the market.

Condition NA$_1$ can also be described in terms of the asset-prices process drifts and volatilities. More precisely, let $A = (A^1, \ldots, A^d)$ be the continuous-path finite-variation process appearing in the Doob-Meyer decomposition of the continuous-path semimartingale $S$. For $i \in \set{1, \ldots, d}$ and $k \in \set{1, \ldots, d}$, denote by $[S^i, S^k]$ the quadratic (co)variation of $S^i$ and $S^k$. Also, let $[S, S]$ be the $d \times d$ nonnegative-definite symmetric matrix-valued process whose $(i, k)$-component is $[S^i, S^k]$ for $i \in \set{1, \ldots, d}$ and $k \in \set{1, \ldots, d}$. Call now $G := \trace [S, S]$, where $\trace$ is the operator returning the trace of a matrix. Observe that $G$ is an increasing, adapted, continuous process, and that there exists a $d \times d$ nonnegative-definite symmetric matrix-valued process $c$ such that $[S^i, S^k] = \int_0^\cdot c^{i,k}(t) \ud G(t)$ for $i \in \set{1, \ldots, d}$ and $k \in \set{1, \ldots, d}$; $[S, S] = \int_0^\cdot c(t) \ud G(t)$ in short. Then, condition NA$_1$ is equivalent to the existence of a $d$-dimensional, predictable process $\xi$ such that $A = \int_0^\cdot (c(t) \xi(t)) \ud G(t)$, satisfying $\int_0^T \pare{\xi^\top(t) c(t) \xi(t)} \ud G(t) < \infty$ for all $T \in \Real_+$. In fact, with the previous notation, it can be checked that the \num \ portfolio is given by $\hX = \Exp \pare{\int_0^\cdot \xi^\top(t) \ud S(t)}$, where ``$\Exp$'' denotes the stochastic exponential operator.

\begin{defn} \label{dfn: disc process}
We shall say that \textsl{the discounting process is asymptotically suboptimal} if there exists $X \in \X_\bF$ such that $\prob \bra{\lim_{t \to \infty} X(t) = \infty}  = 1$.
\end{defn}

The previous definition is self-explanatory --- the locally riskless discounting process (which is used as a baseline to denominate all other wealth) is asymptotically suboptimal if it can be beaten unconditionally in the long run by some other wealth process in the market. As a simple example where the discounting process is asymptotically suboptimal, we mention any multi-dimensional Black-Scholes model such that the probability $\prob$ is not a risk-neutral one.

Given the existence of the \num \ portfolio $\hX$ (i.e., under the validity of condition NA$_1$), the discounting process is asymptotically suboptimal if and only if $\prob \big[ \lim_{t \to \infty} \hX(t) = \infty \big]  = 1$. Indeed, if there exists some $X \in \X_\bF$ such that $\prob \bra{\lim_{t \to \infty} X(t) = \infty}  = 1$, the supermartingale property of $X / \hX$ and Doob's nonnegative supermartingale convergence theorem give $\prob \big[ \lim_{t \to \infty} \hX(t) = \infty \big]  = 1$. Furthermore,  under condition NA$_1$, and with the notation used in the paragraph right before Definition \ref{dfn: disc process}, it can be checked that the discounting process is asymptotically suboptimal if and only if $\int_0^\infty \pare{\xi^\top(t) c(t) \xi(t)} \ud G(t) = \infty$.

\subsection{The first result} \label{subsec: main result}

For the purposes of \S \ref{subsec: main result}, assume that condition NA$_1$ holds in the market and the the discounting process is asymptotically suboptimal. Recall that this is equivalent to existence of the \num \ portfolio $\hX$, which satisfies $\prob \big[ \limt \hX(t) = \infty \big] = 1$.

Define the nonincreasing process $I \dfn \inf_{t \in [0, \cdot]} \hX(t)$; then, $I(\infty) = \inf_{t \in \Real_+} \hX(t)$ is the overall minimum of $\hX$. Let $\bG = (\G(t))_{t \in \zi}$ be the smallest filtration satisfying the usual hypotheses, containing $\bF$, and making $I(\infty)$ a $\G(0)$-measurable random variable. Consider any random time $\rho$ such that $\hX (\rho) = \inf_{t \in \Real_+} \hX(t) = I (\infty)$ --- in other words, $\hX$ achieves at $\rho$ its overall minimum. Since $\prob \big[ \limt \hX(t) = \infty \big] = 1$, such a time is $\prob$-a.s. finite --- in fact, it is also $\prob$-a.s. unique, as will be revealed in Theorem \ref{thm: main} below. Therefore, $\prob$-a.s., $\rho = \inf \big \{ t \in \Real_+ \such \hX (t) = I(\infty) \big \}$, the latter being a stopping time on $\basisg$; since $\G(0)$ contains all $\prob$-null sets of $\F$, it follows that $\rho$ is a stopping time on $\basisg$. Therefore, $\bG$ is strictly larger than the smallest filtration that satisfies the usual hypotheses, contains $\bF$, and makes $\rho$ a stopping time.

What follows is the first result of the paper --- its proof is given in Section \ref{sec: proof}.

\begin{thm} \label{thm: main}
Assume that condition \emph{NA$_1$} holds and that the discounting process is asymptotically suboptimal. Then, the time of minimum of $\hX$ is $\prob$-a.s. unique. With $\rho$ denoting such a time, the process $S^\rho = \pare{S(\rho \wedge t)}_{t \in \Real_+}$ is a local martingale on $\basisgp$.
\end{thm}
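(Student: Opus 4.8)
The plan is to reduce everything to a single, clean statement about the minimum of a nonnegative local martingale that diverges to infinity, and then to transfer that statement from the $\bF$-world to the $\bG$-world via an enlargement-of-filtration argument. Write $\hY = 1/\hX$, a strictly positive $\bF$-local martingale with $\hY(0)=1$ and, by the asymptotic suboptimality assumption, $\hY(t) \to 0$ as $t\to\infty$ $\prob$-a.s.; note that $I(\infty) = 1/\sup_{t\in\Real_+}\hY(t)$, so the time of the overall minimum of $\hX$ is exactly the time of the overall \emph{maximum} of $\hY$. The first step is to show that this maximum is attained at a $\prob$-a.s. unique time. Since $\hY$ is a nonnegative local martingale converging to $0$, it is a genuine supermartingale, and the continuity of paths together with the strict positivity should force uniqueness of the argmax: two distinct maximizing times would, via the strong Markov-type / optional sampling argument applied between them, contradict strict positivity of $\hY$ on a set of positive probability (a continuous nonnegative supermartingale that revisits its running maximum must be constant in between, and one then pushes this to a contradiction with $\hY \downarrow 0$). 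This gives the asserted $\prob$-a.s. uniqueness of $\rho$.

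The core of the proof is the enlargement argument. I would use Jacod-type / Yor-type results on initial enlargement by a single random variable, here $L \dfn I(\infty)$ (equivalently by $\sup_t \hY(t)$). The key classical fact is that for a nonnegative local martingale $N$ with $N(0)=1$ and $N(\infty)=0$, the random variable $1/\sup_t N(t)$ is uniformly distributed on $[0,1]$, and — crucially — Az\'ema-type or Nikeghbali-Yor computations give an explicit $\bG$-semimartingale decomposition of $N$ after enlarging by $\sup_t N(t)$: up to the time $\rho$ at which the supremum is attained, $N$ acquires a drift of the form $\ud\langle N\rangle / (\,\overline{N} - N\,)$ or similar, where $\overline N$ is the running max, while the stopped-at-$\rho$ process retains a local-martingale part. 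More to the point for us: the process $\hY^\rho = (\hY(\rho\wedge t))_{t}$ should remain a $\bG$-local martingale, or at least we can extract from the enlargement formulas that $\hX^\rho \hY = (\hX\hY)^\rho$-type products behave well. The cleanest route is probably: show directly that $\hY^\rho$ is a $\bG$-local martingale (this is where the structure of enlargement by a terminal supremum pays off — before the argmax, conditioning on the value of the future supremum does not introduce a drift into $N$ stopped at $\rho$, because on $\dbraco{0,\rho}$ we have $N < L^{-1}$ strictly and the conditional law manipulations are benign), and then observe that for each $i$, $S^i \hY$ is an $\bF$-local martingale, hence (again by the enlargement formulas, or because $\hY$ is $\bG$-adapted with the stated decomposition) $(S^i\hY)^\rho$ is a $\bG$-local martingale. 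Dividing: on $\dbra{0,\rho}$, $S^i = (S^i \hY)/\hY$ with both numerator (stopped) and $\hY$ (stopped) being $\bG$-local martingales and $\hY$ strictly positive, so by integration by parts / It\^o's formula $S^{i,\rho}$ is a $\bG$-semimartingale whose finite-variation part must vanish — one checks the drift is $\ud\langle S^i\hY, \hY\rangle^{\rho}/(\hY^\rho)^2$ minus the contribution from the enlargement drift of $\hY$, and these cancel precisely because $S^i\hY$ and $\hY$ pick up \emph{the same} enlargement drift density. Hence $S^\rho$ is a $\bG$-local martingale on $\basisgp$.

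Concretely I would organize the steps as: (i) translate to $\hY$, record $\hY(0)=1$, $\hY(\infty)=0$, $\hY$ continuous and strictly positive; (ii) prove uniqueness of the argmax of $\hY$ (equivalently of $\rho$); (iii) invoke/derive the $\bG$-decomposition of $\hY$ under initial enlargement by $\sup\hY$, identifying that $\hY^\rho$ is a $\bG$-local martingale; (iv) use that $S^i\hY$ is an $\bF$-local martingale for each $i$ and that it picks up the identical enlargement drift, so that $(S^i\hY)^\rho$ is a $\bG$-local martingale; (v) write $S^{i,\rho} = (S^i\hY)^\rho/\hY^\rho$ and use integration by parts plus the explicit drift cancellation to conclude $S^\rho$ is a $\bG$-local martingale; finish with a localization remark so that no integrability is assumed beyond the local-martingale property.

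The main obstacle is step (iii)–(iv): making rigorous the claim that enlarging by the \emph{terminal supremum} of $\hY$ turns $\hY$, \emph{stopped at its argmax}, into a $\bG$-local martingale, and that $S^i\hY$ inherits exactly the same compensator, so that the drifts cancel after dividing. One has to be careful that $\bG$ is the initial enlargement by $I(\infty)$ (a $\G(0)$-measurable random variable) and that the relevant Jacod absolute-continuity hypothesis holds — here it should, because the conditional law of $\sup_t\hY(t)$ given $\F(t)$ is explicit and equivalent to its unconditional law on the event $\{\rho > t\}$ (the running max not yet "locked in"), which is precisely the regime $\dbraco{0,\rho}$ we care about. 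Handling the behavior exactly at and after $\rho$ (where the conditional law degenerates) is avoided by stopping at $\rho$. I expect the continuity-of-paths hypothesis to be used repeatedly to guarantee that these conditional laws are nice and that there are no jumps complicating the argmax or the enlargement formulas.
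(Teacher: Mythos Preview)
Your central claim (iii) --- that $\hY^\rho$ is a $\bG$-local martingale --- is false, and this collapses the whole strategy. Here is a one-line disproof. Since $\rho$ is the time of the overall maximum of $\hY$, one has $\hY^\rho(t) \to \hY(\rho) = 1/I(\infty)$ as $t \to \infty$. If $\hY^\rho$ were a nonnegative $\bG$-local martingale it would be a $\bG$-supermartingale, so by conditional Fatou
\[
\frac{1}{I(\infty)} \ = \ \expec\!\left[\frac{1}{I(\infty)} \,\Big|\, \G(0)\right] \ \leq \ \liminf_{t\to\infty}\expec\!\left[\hY^\rho(t)\,\big|\,\G(0)\right] \ \leq \ \hY(0)=1,
\]
using that $1/I(\infty)$ is $\G(0)$-measurable. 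This forces $I(\infty)\geq 1$, contradicting the fact (which you yourself quote) that $I(\infty)$ is uniform on $[0,1]$. Intuitively, your heuristic ``before the argmax, conditioning on the supremum introduces no drift'' is exactly backwards: conditioning a local martingale to reach a prescribed high level pushes it \emph{upward} before that level is hit, so $\hY$ acquires a strictly positive drift on $\dbraco{0,\rho}$ in $\bG$. Consequently $(S^i\hY)^\rho$ is not a $\bG$-local martingale either, and even if both picked up \emph{some} enlargement drift, the It\^o formula for a ratio $M/N$ of two continuous local martingales produces the finite-variation term $\frac{M}{N^3}\,\ud\langle N\rangle - \frac{1}{N^2}\,\ud\langle M,N\rangle$, which does not vanish. (Test case: $M=1$, $N=\hY$, so $M/N=\hX$; the drift is $\hX^3\,\ud\langle\hY\rangle\neq 0$, consistent with $\hX^\rho$ --- not $\hY^\rho$ --- being the $\bG$-local martingale.) Finally, your appeal to Jacod's criterion is exactly what the paper's Remark after Theorem \ref{thm: main} rules out: the conditional law of $I(\infty)$ given $\F(t)$ has a Dirac atom at $I(t)$ of mass $1-I(t)\hY(t)$, while the unconditional law is Lebesgue, so absolute continuity fails.

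The paper takes a completely different route that sidesteps enlargement decompositions. It introduces the first-passage times $\eta_u=\inf\{t:\hY(t)=1/(1-u)\}$ and the probabilities $\qprobu$ obtained by conditioning $\prob$ on $\{\eta_u<\infty\}$, whose density is $\hY(\eta_u)$; since $\hY$ is a local martingale deflator, $S^{\eta_u}$ is automatically a $\qprobu$-local martingale for every $u$. The key lemma is a disintegration formula
\[
\expec[V(\rho)] \ = \ \int_0^1 \expecqu[V(\eta_u)]\,\ud u
\]
for bounded $\bF$-optional $V$, obtained by identifying $-\log I$ as the dual optional projection of $\indic_{\dbraco{\rho,\infty}}$. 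The $\bG$-martingale property of $S^\rho$ then reduces, via a $\pi$--$\lambda$ argument, to the $\qprobu$-martingale property of $S^{\eta_u}$ for each $u$, which is already in hand. Your uniqueness sketch is also vaguer than needed: the paper simply observes, via Doob's maximal identity, that the first and last times of the minimum have the \emph{same} $\bF$-Az\'ema supermartingale $I\hY$, hence the same law, hence coincide.
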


\begin{rem}
The result of Theorem \ref{thm: main} does not appear to follow directly from well known results in the theory of filtration enlargements. In particular:
\begin{itemize}
	\item A widely used sufficient condition that enables the use of the theory of initial filtration enlargements is the so-called \emph{Jacod's criterion} \cite{Jacod-init}, which states that the conditional law of the random variable $I(\infty)$ given $\F(t)$ is absolutely continuous with respect to its unconditional law for all $t \in \Real_+$. However, the conditional law of $I(\infty)$ given $\F(t)$ has a Dirac component of mass $1 - I(t) \hY(t)$ at the point $I(t)$, as follows from Doob's maximal identity (\cite[Lemma 2.1]{MR2247846} --- see also the beginning of Section \ref{sec: proof}), while the unconditional law of $I(\infty)$ is standard uniform (this is proved in Section \ref{sec: proof}). Therefore, Jacod's criterion fails.
	\item The Jeulin-Yor semimartingale decomposition result (see \cite{MR519998}) cannot be utilized, because this is not a case of progressive filtration enlargement. Furthermore, as already noted, the filtration $\bG$ is strictly larger than the smallest filtration that satisfies the usual hypotheses, contains $\bF$, and makes $\rho$ a stopping time. 
\end{itemize}
One could use the general results of \cite[Section 3]{MR2247846} in order to establish the validity of Theorem \ref{thm: main}. Here, we provide a simple, self-contained alternative proof, in the course of which the concepts of local martingale deflators and martingale measures will play an important role.
\end{rem}

\begin{rem} \label{rem: generalizing myself}
Theorem \ref{thm: main} justifies the title of the paper. With the insider information flow $\bG$, investing in the risky assets before time $\rho$ gives the same instantaneous return as the locally riskless asset, but entails (locally) higher risk; therefore, before $\rho$ an insider would not be willing to take any position on the risky assets. One can make the point more precise. Let $\X_\bG^\rho$ be the class of nonnegative processes of the form $x + \int_0^\cdot \vartheta^\top(t) \ud S^\rho(t)$, where now $x$ is $\G(0)$-measurable and $\vartheta$ is $\bG$-predictable and $S^\rho$-integrable. By Theorem \ref{thm: main}, all processes in $\X_\bG^\rho$ are nonnegative local martingales on $\basisgp$, which implies that they are nonnegative supermartingales on $\basisgp$. Therefore, $\expec [X (\rho) \such I(\infty)] \leq X(0)$ holds for all $X \in \X_\bG^\rho$. (In particular, $\expec [X (\rho)] \leq X(0)$ holds for all $X \in \X_\bF$, which sharpens the conclusion of \cite[Theorem 2.15]{Kar-pref} for continuous-path semimartingale models.) Jensen's inequality then implies that any expected utility maximizer having an increasing and concave utility function, information flow $\bG$, and time-horizon before $\rho$, would not take any position in the risky assets. 
\end{rem}

\begin{rem}
At first sight, Theorem \ref{thm: main} appears counterintuitive. If the overall minimum of $\hX$ is known from the outset exactly, and especially if it is going to to be extremely low, taking an opposite (short) position in it should ensure particularly good performance at the time of the overall minimum of $\hX$. Of course, admissibility constraints prevent one from taking an \emph{absolute} short position on the \num \ portfolio; still, one can imagine that a \emph{relative} short position on the \num \ portfolio should result in something substantial. To understand better why this intuition fails, remember that $\hX = \Exp \pare{\int_0^\cdot \xi^\top(t) \ud S(t)}$ in the notation of \S \ref{subsec: set-up}, which was noted in the discussion before Definition \ref{dfn: disc process}. A relative short position would result in the wealth $X = \Exp \pare{ - \int_0^\cdot \xi^\top(t) \ud S(t)}$. Straightforward computations show that
\[
X (\rho) = \frac{1}{\hX(\rho)} \exp \pare{- \int_0^\rho \pare{\xi^\top(t) c(t) \xi(t)} \ud G(t)}
\]
Note that the term $\int_0^\rho \pare{\xi^\top(t) c(t) \xi(t)} \ud G(t)$ is the integrated squared volatility of the \num \ portfolio up to time $\rho$, as follows from $\hX = \Exp \pare{\int_0^\cdot \xi^\top(t) \ud S(t)}$. Even though $\hX(\rho)$ can be very close to zero, the term $\exp \pare{- \int_0^\rho \pare{\xi^\top(t) c(t) \xi(t)} \ud G(t)}$ will compensate for the small values of $\hX(\rho)$. In effect, the integrated squared volatility of the \num \ portfolio up to the time of its overall minimum will eliminate any chance of profit by taking short positions in it. 
\end{rem}

\subsection{A partial converse to Theorem \ref{thm: main}}

In Remark \ref{rem: generalizing myself}, it was argued that $\expec [X(\rho)] \leq X(0)$ holds for all $X \in \X_\bF$. A partial converse of the previous result will be presented now. Before stating the result, some  definitions are needed.

\begin{defn} \label{dfn: complete}
Consider a market as described in \S \ref{subsec: set-up}, satisfying condition NA$_1$. The market will be called \textsl{complete} if for any stopping time $\tau$ and any $\F_\tau$-measurable nonnegative random variable $H_\tau$ with $\expec \big[ \hY_\tau H_\tau \big] < \infty$, there exists $X \in \X_\bF$ such that $X_\tau = H_\tau$.
\end{defn}

\begin{rem}
A market as described in \S \ref{subsec: set-up} satisfies condition NA$_1$ \emph{if and only if} there exists at least one strictly positive supermartingale deflator. It can be actually shown that the market is further complete in the sense of Definition \ref{dfn: complete} \emph{if and only if} there exists a unique strictly positive supermartingale deflator. The proof is similar to the one for the case where an equivalent martingale measure exists in the market --- one has to utilize results on optional decomposition under the assumption that a strictly positive local martingale deflator (but not necessarily an equivalent martingale measure) exists in the market; such results are presented in \cite{MR1651229}. In fact, it can be further shown that in a complete market, for any stopping time $\tau$ and $\F_\tau$-measurable nonnegative random variable $H_\tau$, one has
\[
\expec \big[ \hY_\tau H_\tau \big] = \min \set{x \in \Real_+ \such \text{ there exists } X \in \X_\bF(x) \text{ with } X_\tau = H_\tau},
\]
which gives a formula for the minimal hedging price of the payoff $H_\tau$ delivered at time $\tau$.
\end{rem}

\begin{defn}
Let $\phi$ be a random time on $\basisp$. If $\prob \bra{\phi = \tau} = 0$ holds for all stopping times $\tau$ on $\basis$, we shall say that \textsl{$\phi$ avoids all stopping times on $\basisp$}. Furthermore, $\phi$ will be called
an honest time on $\basis$ if for all $t \in \Real_+$ there exists an $\F_t$-measurable random variable $\phi_t$ such that $\phi = \phi_t$ holds on $\set{\phi \leq t}$.
\end{defn}

As it turns out (and will come as an immediate consequence of Theorem \ref{thm: partial conv} below), the random time $\rho$ defined in \S \ref{subsec: main result} is an honest time that avoids all stopping times on $\basisp$. The next result states that, if the market is viable and complete, $\rho$ is the \emph{unique} honest time that avoids all stopping times on $\basisp$, with the property that a wealth processes sampled at this random time has expectation dominated by its initial capital.

\begin{thm} \label{thm: partial conv}
Assume that condition \emph{NA$_1$} holds and that the market is complete. Let $\phi$ be an honest time that avoids all stopping times on $\basisp$, such that $\expec[X(\phi)] \leq X(0)$ holds for all $X \in \X_\bF$. Then, the discounting process is asymptotically suboptimal and $\phi = \rho$.
\end{thm}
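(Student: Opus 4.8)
The plan is to exploit market completeness to translate the hypothesis $\expec[X(\phi)] \leq X(0)$ for all $X \in \X_\bF$ into a statement about the single strictly positive local martingale deflator $\hY = 1/\hX$, and then to combine this with the structure theory of honest times that avoid all stopping times. First I would argue that the discounting process is asymptotically suboptimal. Suppose not; then by the discussion after Definition \ref{dfn: disc process} we do \emph{not} have $\prob[\limt \hX(t) = \infty] = 1$, and in fact (using Doob's nonnegative supermartingale convergence theorem applied to $\hY$) the limit $\hY(\infty) = \limt \hY(t)$ exists and is strictly positive with positive probability on a set of the form $\{\hX(\infty) < \infty\}$. The idea is to build, via completeness, a wealth process that is cheap to set up but has large expectation at $\phi$ — for instance by hedging a claim of the form $H_\tau = \hX(\tau)\indic_A$ for a suitable stopping time $\tau \geq \phi$ and event $A$ — contradicting $\expec[X(\phi)] \leq X(0)$. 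Once asymptotic suboptimality is in force, $\rho$ is well defined and $\prob$-a.s. unique by Theorem \ref{thm: main}.

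Next I would pin down the relationship between $\phi$ and $\hY$. Since the market is complete, for a stopping time $\tau$ and an $\F_\tau$-measurable nonnegative $H_\tau$ with $\expec[\hY_\tau H_\tau] < \infty$ there is a \emph{unique} $X \in \X_\bF$ with $X_\tau = H_\tau$, and the minimal hedging cost is $\expec[\hY_\tau H_\tau]$. The condition $\expec[X(\phi)] \leq X(0)$ for all such $X$, applied with the hedging strategy stopped at $\tau$ (so that on $\{\phi \leq \tau\}$ we have $X(\phi) = X(\phi \wedge \tau)$), should force, for every stopping time $\tau$,
\[
\expec\big[\hX(\phi)\indic_{\{\phi \leq \tau\}} \cdot (\text{test payoff})\big] \leq \text{(initial cost)},
\]
and by a suitable choice of test payoffs (localizing and using that $\phi$ is honest, so $\phi$ on $\{\phi \leq \tau\}$ is $\F_\tau$-measurable) this is equivalent to the supermartingale/optional-projection identity characterizing the predictable or optional projection of $\indic_{\dbraco{0,\phi}}$. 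Concretely, I expect to derive that the Azéma-type supermartingale $Z^\phi(t) \dfn \prob[\phi > t \such \F_t]$ satisfies $Z^\phi = \hX(\cdot \wedge \rho)\,\hY$ — i.e., $Z^\phi$ coincides with the supermartingale $\hX^\rho \hY = I\hY$ that, by Doob's maximal identity (quoted in the remark after Theorem \ref{thm: main}, with $I(t)\hY(t) = \prob[I(\infty) \geq I(t) \such \F_t]$ up to the Dirac-mass bookkeeping), governs $\rho$. Because an honest time that avoids all stopping times is determined $\prob$-a.s. by its Azéma supermartingale (the time is recovered as the last time the supermartingale equals $1$, or more precisely $\phi = \sup\{t : Z^\phi(t) = 1\}$ using that the avoiding property makes the supermartingale continuous at $1$ from the relevant side), identifying $Z^\phi = I\hY$ with the Azéma supermartingale of $\rho$ yields $\phi = \rho$ $\prob$-a.s.

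The main obstacle, I expect, will be the middle step: showing that the scalar inequality $\expec[X(\phi)] \leq X(0)$ — a single linear constraint for each wealth process — is actually strong enough to force the full supermartingale identity $Z^\phi = I\hY$, rather than merely some inequality. The honesty of $\phi$ is essential here, since it lets one write $X(\phi) = X(\phi_\tau)\indic_{\{\phi \leq \tau\}} + X(\tau)\indic_{\{\phi > \tau\}}$ in a way that, combined with completeness (freedom to prescribe $X_\tau$ arbitrarily subject to the budget constraint), turns the family of scalar inequalities into a pointwise statement about conditional expectations; and the avoidance of stopping times is what rules out atoms that would otherwise obstruct both the uniqueness of the recovery $\phi = \sup\{t : Z^\phi(t) = 1\}$ and the matching of Dirac components in Doob's maximal identity. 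I would handle the necessary optional/predictable projection manipulations by testing against indicators $\indic_B$ with $B \in \F_\tau$ and letting $\tau$ range over a rich enough family, reducing everything to the already-established fact (Theorem \ref{thm: main}) that $I\hY$ is precisely the Azéma supermartingale of $\rho$, so that the proof closes by the uniqueness of the Azéma supermartingale among honest times avoiding stopping times.
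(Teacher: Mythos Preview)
Your approach diverges substantially from the paper's, and the step you yourself flag as the ``main obstacle'' is a genuine gap that the outline does not close. You want to extract, directly from the family of \emph{inequalities} $\expec[X(\phi)] \leq X(0)$, the \emph{equality} $Z^\phi = I\hY$ for the Az\'ema supermartingale, using completeness via hedging. But completeness only lets you prescribe $X(\tau)$ at a stopping time $\tau$; it gives you no control over $X(\phi)$ on $\{\phi < \tau\}$, and honesty of $\phi$ (which makes $\phi$ $\F_\tau$-measurable on $\{\phi \leq \tau\}$) still does not supply the missing half of the inequality. Your sketch never produces a mechanism that upgrades $\leq$ to $=$. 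The separate contradiction argument for asymptotic suboptimality is also left vague, and in fact is unnecessary.

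The paper proceeds quite differently, and completeness is used only indirectly --- never through hedging of payoffs. First, uniqueness of the strictly positive local martingale deflator forces all local martingales on $\basisp$ to be continuous. One can then invoke the Nikeghbali--Yor structure theorem: any honest time avoiding all stopping times in such a filtration is the time of overall maximum of some continuous nonnegative local martingale $L$ with $L(0)=1$ and $\prob[\limt L(t) = 0] = 1$. The whole task reduces to showing $L = \hY$; asymptotic suboptimality then comes for free from $L(\infty) = 0$. To identify $L$, one mimics the change-of-measure machinery of Lemma~\ref{lem: big help} with $L$ in place of $\hY$: under the conditioned measures $\qprobu$ attached to the level-crossing times $\eta_u$ of $L$, the hypothesis $\expec[X(\phi)] \leq X(0)$ becomes $\expecqu\big[\int_0^{\eta_u}(1-K(t))\,\ud X(t)\big] \leq 0$ for bounded $X \in \X_\bF$. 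Applying this with $X = 1 \pm n^{-1}(S^i - S^i(0))^{\tau^i_n}$ --- the $\pm$ is precisely how one-sided hypotheses yield two-sided bounds --- gives that each $(S^i)^{\eta_u}$ is a $\qprobu$-local martingale, hence $L S^i$ is a $\prob$-local martingale, hence $L$ is a local martingale deflator. Uniqueness of the deflator then forces $L = \hY$ and $\phi = \rho$.
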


\begin{rem}
An inspection of the proof of Theorem \ref{thm: main} shows that, under its assumptions, whenever $\phi$ is the time of maximum of a continuous-path local martingale deflator (which is an honest time that avoids all stopping times), $\expec[X(\phi)] \leq X(0)$ holds for all $X \in \X_\bF$. Therefore, if the market is incomplete, in which case there exist more than one local martingale deflators, the result of Theorem \ref{thm: partial conv} is no longer valid.

Furthermore, since the honest time $\phi = 0$ is such that $\expec[X(\phi)] \leq X(0)$ trivially holds for all $X \in \X_\bF$, the assumption that $\phi$ avoids all stopping times on $\basisp$ cannot be avoided in the statement of Theorem \ref{thm: partial conv}. It is less clear how essential the assumption that $\phi$ is an honest time is. No immediate counterexample comes to mind, although it is quite possible that one exists. Note, however, that $\phi$ being an honest time is instrumental in the proof of Theorem \ref{thm: partial conv}; therefore, further investigation of this issue is not undertaken.
\end{rem}

\section{Proofs} \label{sec: proof}

In the course of the proofs below, we shall use the so-called Doob's maximal identity, which we briefly recall for the reader's convenience. If $M$ is a continuous-path nonnegative local martingale on $\basisp$ such that $\prob[\lim_{t \to \infty} M_t = 0] = 1$ holds, then, with $M^* \dfn \max_{t \in [0,\cdot]} M_t$ and $\rho^M$ denoting \emph{any} time of maximum of $M$, one has the equality $\prob[\rho^M > \tau \such \F_\tau] = M_\tau / M^*_\tau$ whenever $\tau$ is a finite stopping time on $\basis$. Doob's maximal identity can be shown by applying Doob's optional sampling theorem. For a proof of the identity in the form presented above, see  \cite[Lemma 2.1]{MR2247846} combined with \cite[proof of Theorem 2.14]{Kar-pref}.

\subsection{Proof of Theorem \ref{thm: main}}

We first show that $\rho$ is $\prob$-a.s. unique. Define the random times $\rho' \dfn \inf \big\{ t \in \Real_+ \such \hX(t) = I(\infty) \big\}$ and $\rho'' \dfn \sup \big\{ t \in \Real_+ \such \hX(t) = I(\infty) \big\}$. Since $\hY \dfn 1 / \hX$ a nonnegative local martingale that vanishes at infinity on $\basisp$, Doob's maximal identity implies that $\prob \bra{\rho' > t \such \F(t)} = \prob \bra{\rho'' > t \such \F(t)} = I(t) \hY(t)$ for all $t \in \Real_+$. The previous imply that $\rho'$ and $\rho''$ have the same law under $\prob$. Since $\rho' \leq \rho''$, it follows that $\prob [\rho' = \rho''] = 1$. Furthermore, since for any time $\rho$ of minimum  of $\hX$ (which is a time of maximum of $\hY$) we have $\rho' \leq \rho \leq \rho''$, it follows that the time of minimum of $\hX$ is $\prob$-a.s. unique.

For all $u \in \zo$ define $\eta_u \dfn \inf \big\{t \in \Real_+ \such \hY(t) = 1 / (1 - u) \big\}$; then, $(\eta_u)_{u \in \zo}$ is a nondecreasing collection of stopping times on $\basis$. Recall that $\hY$ is a nonnegative local martingale on $\basisp$ such that $\hY(0) = 1$ and $\prob \big[ \lim_{t \to \infty} \hY(t) = 0 \big] = 1$. Also, $1 / I = \sup_{t \in [0,\cdot]} \hY(t)$ and $I(\rho) = I(\infty)$. By the definition of $(\eta_u)_{u \in \zo}$, $\hY^{\eta_u}$ is a uniformly bounded martingale on $\basisp$ with terminal value $\hY^{\eta_u}_\infty = \hY_{\eta_u} = 1 / (1 - u) \indic_{\set{\eta_u < \infty}}$. In particular, Doob's optional sampling theorem gives $\prob \bra{\eta_u < \infty} = 1 - u$; therefore, $I(\infty)$ has the standard uniform distribution under $\prob$ since $\prob \bra{I(\infty) \leq 1 - u} = \prob \bra{\eta_u < \infty} = 1 - u$ holds for $u \in \zo$.

For $u \in \zo$, let $\qprobu$ be the probability $\prob$ on $(\Omega, \, \F)$ conditioned on $\set{\eta_u < \infty}$; of course, $\qprobu$ is absolutely continuous with respect to $\prob$. From the discussion above, $\ud \qprobu / \ud \prob = \pare{1 / (1 - u)} \indic_{\set{\eta_u < \infty}} = \hY (\eta_u) = $ holds for all $u \in \zo$. We use ``$\expecqu$'' to denote expectation under $\qprobu$ for $u \in \zo$ and ``$\expec$'' to denote expectation under $\prob = \prob_0$.

\begin{rem} \label{rem: lmm}
Since all $\hY S^i$, $i \in \set{1, \ldots, d}$, are local martingales on $\basisp$, it follows that $S^{\eta_u}$ is a local martingale in $\basisqu$ for all $u \in \zo$. In other words, $\qprobu$ is an absolutely continuous local martingale measure for $S^{\eta_u}$ for all $u \in \zo$.

A key step towards the proof of Theorem \ref{thm: main} will be Lemma \ref{lem: big help} below. Loosely interpreted, it states that taking the expectation of an $\basis$-optional process sampled at $\rho$ is tantamount to taking the expectation of the same process sampled at $\eta_u$ under $\prob_u$, where ``$u$ has standard uniform distribution, independent of everything else''. Combined with the fact that $\qprobu$ is an absolutely continuous local martingale measure for $S^{\eta_u}$ for all $u \in \zo$, this immediately connects to the statement of Theorem \ref{thm: main}.
\end{rem}

Before stating and proving Lemma \ref{lem: big help}, define also the nonnegative nondecreasing process $U = 1 - I$. Of course, $U(\infty) = U(\rho) = 1 - I(\infty)$ has the standard uniform distribution under $\prob$.

\begin{lem} \label{lem: big help}
For all $u \in \zo$, $\qprobu \bra{\eta_u < \infty} = 1$ holds; in particular, $\qprobu [U(\eta_u) = u] = 1$. Furthermore, for any bounded and $d$-dimensional process $V$ that is optional on $\basis$, we have
\begin{equation} \label{eq: big help}
\expec \bra{V(\rho)} = \expec \bra{ \int_{\Real_+} V(t) \hY(t) \ud U(t) } = \int_{\zo} \expecqu \bra{V(\eta_u)} \ud u.
\end{equation}
\end{lem}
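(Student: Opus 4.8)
The plan is to prove the three assertions of Lemma~\ref{lem: big help} in turn, using Doob's maximal identity together with a change-of-measure argument, and then reducing everything to a single ``master'' identity of the form $\expec[V(\rho)] = \expec[\int_{\Real_+} V(t) \hY(t) \, \ud U(t)]$.

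\textbf{Step 1: The conditional-law computation and $\qprobu[\eta_u < \infty] = 1$.} First I would record that $\hY$ is a continuous-path nonnegative local martingale on $\basisp$ with $\hY(0) = 1$ and $\prob[\lim_{t\to\infty}\hY(t) = 0] = 1$, and that $1/I = \sup_{t \in [0,\cdot]} \hY(t)$ is its running maximum. Doob's maximal identity then gives $\prob[\rho > \tau \such \F_\tau] = \hY_\tau / (1/I_\tau) = I_\tau \hY_\tau$ for every finite stopping time $\tau$ (this was already derived for $\rho'$ and $\rho''$, and $\rho' = \rho = \rho''$ a.s.). Since $\ud\qprobu/\ud\prob = \hY(\eta_u)$ and $\hY^{\eta_u}$ is a uniformly bounded martingale, $\qprobu[\eta_u < \infty]$ can be computed by a direct change of measure: $\qprobu[\eta_u < \infty] = \expec[\hY(\eta_u) \indic_{\set{\eta_u < \infty}}] / \prob[\eta_u < \infty]$, and since $\hY(\eta_u) = (1/(1-u))\indic_{\set{\eta_u < \infty}}$ this equals $(1/(1-u)) \prob[\eta_u < \infty] / (1-u) = 1$, using $\prob[\eta_u < \infty] = 1-u$ established just above the lemma. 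On $\set{\eta_u < \infty}$ one has $\hY(\eta_u) = 1/(1-u)$, hence $I(\eta_u) \geq 1-u$; but also $\hY$ reaches level $1/(1-u)$ at $\eta_u$, so the running max $1/I(\eta_u)$ is at least $1/(1-u)$, i.e. $I(\eta_u) \leq 1-u$. Thus $I(\eta_u) = 1-u$ and $U(\eta_u) = 1-I(\eta_u) = u$ hold $\qprobu$-a.s., giving the second claim.

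\textbf{Step 2: The right-hand equality in \eqref{eq: big help}.} This is a disintegration. Since $\prob[\eta_u < \infty] = 1-u$ and $\ud\qprobu/\ud\prob = (1/(1-u))\indic_{\set{\eta_u<\infty}}$, for any bounded measurable $W$ one has $\expecqu[W] = (1/(1-u))\expec[W \indic_{\set{\eta_u < \infty}}]$. Apply this with $W = V(\eta_u)$ and integrate over $u \in \zo$: $\int_\zo \expecqu[V(\eta_u)]\,\ud u = \int_\zo \frac{1}{1-u}\expec[V(\eta_u)\indic_{\set{\eta_u<\infty}}]\,\ud u$. Now I would change variables inside the expectation using the key observation that, on $\set{\eta_u < \infty}$, $\eta_u$ is the (left-continuous in $u$) time-change inverse of the nondecreasing process $U$, so that $\int_\zo f(\eta_u)\indic_{\set{\eta_u<\infty}}\,\frac{\ud u}{1-u}$ rewrites, via the substitution $u = U(t)$ and $\ud u = \ud U(t)$ together with $1 - u = 1 - U(t) = I(t) = 1/\hY(t)$ at the relevant times, as $\int_{\Real_+} f(t) \hY(t)\,\ud U(t)$ (care is needed because $U$ may have jumps and flat stretches, which is exactly where $\eta$ being the generalized inverse is used). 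Taking expectations and using Fubini gives $\int_\zo \expecqu[V(\eta_u)]\,\ud u = \expec[\int_{\Real_+} V(t)\hY(t)\,\ud U(t)]$, which is the right-hand equality.

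\textbf{Step 3: The left-hand equality $\expec[V(\rho)] = \expec[\int_{\Real_+} V(t)\hY(t)\,\ud U(t)]$.} Here I would use that $U$ increases only on the set $\set{\hY_t = 1/I_t}$, i.e. only ``at record times'' of $\hY$, and that $U(\infty) = U(\rho)$ with $U$ flat after $\rho$ (once the overall minimum of $\hX$, equivalently the overall maximum of $\hY$, is attained). By a monotone-class argument it suffices to treat $V$ of product form $V(t) = \indic_{\braco{0,s}}(t) \, Z$ with $Z$ an $\F_s$-measurable bounded random variable, or more simply to reduce to showing $\expec[f(\rho, \omega)] = \expec[\int_{\Real_+} f(t,\omega)\hY(t)\,\ud U(t)]$ for optional $f$; then both sides are computed by conditioning. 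Concretely, $\int_{\Real_+} V(t)\hY(t)\,\ud U(t)$: since $\ud U$ is carried by the record set where $\hY_t = 1/I_t$, and on $\set{\eta_u < \infty}$ the first record at level $1/(1-u)$ is $\eta_u$, the Doob-maximal-identity disintegration $\expec[\cdot \such \F_\tau]$ shows that the $\prob$-law of $\rho$ together with the value process is reproduced by ``pick $u$ uniform, evaluate at $\eta_u$''. I expect the cleanest route is actually to prove the two ends of \eqref{eq: big help} simultaneously by establishing that, for bounded optional $V$, $\expec[V(\rho)] = \int_\zo \expecqu[V(\eta_u)]\,\ud u$, via the identity $\prob[\rho > t \such \F_t] = I_t\hY_t = (1 - U_t)\hY_t$ and an integration by parts / Fubini argument, and then separately identifying $\int_\zo\expecqu[V(\eta_u)]\,\ud u$ with the middle expression as in Step~2.

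\textbf{Main obstacle.} The delicate point is the change of variables $u \leftrightarrow U(t)$ in Step~2/Step~3 when $U$ is not strictly increasing or not continuous: one must verify that the generalized inverse $\eta_u = \inf\set{t : \hY_t = 1/(1-u)} = \inf\set{t : U_t \geq u}$ pairs correctly with $\ud U$, so that the flat pieces of $U$ contribute nothing and the jumps of $U$ (corresponding to jumps of $1/I$, i.e. of the running max of $\hY$) are accounted for exactly once — here continuity of paths of $\hY$ helps, since then $1/I$ is continuous and $U$ has no jumps, so $\eta$ is genuinely the right-continuous inverse of the continuous nondecreasing $U$ and the substitution is clean. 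The other subtle point, needed to pass from bounded $V$ to the measurability class actually used later, is the monotone-class / functional-form reduction, which is routine once the product-form case is in hand.
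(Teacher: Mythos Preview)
Your approach is essentially the same as the paper's, but two points deserve correction or sharpening.

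First, Step~1 contains an error: you write $\qprobu[\eta_u < \infty] = \expec[\hY(\eta_u) \indic_{\set{\eta_u < \infty}}] / \prob[\eta_u < \infty]$, but this is not the change-of-measure formula. Since $\ud\qprobu/\ud\prob = \hY(\eta_u) = (1/(1-u))\indic_{\set{\eta_u<\infty}}$, one simply has $\qprobu[\eta_u<\infty] = \expec\bigl[(1/(1-u))\indic_{\set{\eta_u<\infty}}\bigr] = (1/(1-u))\prob[\eta_u<\infty] = 1$, with no division. (Alternatively, since $\qprobu$ is by definition $\prob$ conditioned on $\set{\eta_u<\infty}$, the claim is trivial.) Your subsequent arithmetic $(1/(1-u))\prob[\eta_u<\infty]/(1-u)$ does not equal $1$, so the written computation is inconsistent.

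Second, your Step~3 is the heart of the lemma and you leave it vague. The paper makes it concrete by computing the dual optional projection of $\indic_{\dbraco{\rho,\infty}}$: using the It\^o product rule on the Az\'ema supermartingale $I\hY$ together with $\int_{\Real_+}\indic_{\{\hY(t)\neq 1/I(t)\}}\,\ud I(t)=0$, one gets $\ud(I\hY) = (1/I)\,\ud I + I\,\ud\hY$, and after localizing and taking conditional expectations this yields $\prob[s<\rho\leq t\mid\F(s)] = \expec[\log I(s)-\log I(t)\mid\F(s)]$. Hence $-\log I$ is the dual optional projection, giving $\expec[V(\rho)] = \expec[-\int V(t)\,\ud I(t)/I(t)]$; replacing $1/I$ by $\hY$ on the support of $\ud I$ and then performing the time-change $u=U(t)$ (your Step~2) finishes the proof. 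Your ``integration by parts / Fubini'' hint points in the right direction, but the specific identification of $-\log I$ as the compensator is the missing concrete step.
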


\begin{proof}
First of all, note that $\qprobu[\eta_u < \infty] = \expec [(1/(1-u)) \indic_{\set{\eta_u < \infty}}] = (1/(1-u)) \prob [\eta_u < \infty] = 1$ holds for all $u \in \zo$.

In order to establish \eqref{eq: big help}, start by observing that $\prob [\rho > t \such \F(t)] = I(t) \hY(t)$ holds for all $t \in \Real_+$, in view of Doob's maximal identity. (Recall that $\hY$ is a continuous-path nonnegative local martingale on $\basisp$ and that $\prob [ \lim_{t \to \infty} \hY_t = 0 ] = 1$.) Fix $s \in \Real_+$ and $t \in \Real_+$ with $s \leq t$. The definition of $I$ and the integration-by-parts formula give
\begin{align*}
I(s) \hY(s) - I(t) \hY(t) &= - \int_s^t \hY(v) \ud I(v) - \int_s^t I(v) \ud \hY(v) \\
&= - \int_s^t \frac{1}{I(v)} \ud I(v) - \int_s^t I(v) \ud \hY(v) \\	
&= \log(I(s)) - \log(I(t)) - \int_s^t I(v) \ud \hY(v),
\end{align*}
the second equality following from the fact that $\int_{\Real_+} \indic_{\{\hY(t) \, \neq \, 1/I(t)\}} \ud I(t) = 0$. Note that $0 \leq I \hY \leq 1$. With $(\tau_n)_{\nin}$ denoting a localizing sequence for $\int_0^\cdot I(v) \ud \hY(v)$, which is a local martingale on $\basisp$, it follows that
\begin{align*}
\prob [s \wedge \tau_n < \rho \leq t \wedge \tau_n \such \F(s \wedge \tau_n)] &= \expec \big[ I(s \wedge \tau_n ) \hY(s \wedge \tau_n ) - I(t \wedge \tau_n ) \hY(t \wedge \tau_n ) \such \F(s \wedge \tau_n) \big] \\
&= \expec \bra{ \log(I(s \wedge \tau_n)) - \log(I(t \wedge \tau_n)) \such \F(s \wedge \tau_n)}.
\end{align*}
Upon sending $n$ to infinity, appropriate versions of the bounded and monotone convergence theorem applied to the first and last sides of the above equality will give
\[
\prob [s < \rho \leq t \such \F(s)] = \expec \bra{ \log(I(s)) - \log(I(t)) \such \F(s)}.
\]
As $- \log(I)$ is non-decreasing and adapted, it coincides with the optional compensator (dual optional projection) of $\indic_{\dbraco{\rho, \infty} }$ on $\basisp$. In other words,
\begin{align*}
\expec [V(\rho)] &= \expec \bra{- \int_{\Real_+} V(t) \frac{\ud I(t)}{I(t)}} \\
&= \expec \bra{- \int_{\Real_+} V(t) \hY(t) \ud I(t)} \\
&= \expec \bra{\int_{\Real_+} V(t) \hY(t) \ud U(t)} \\
&= \expec \bra{ \int_{\zo} V (\eta_u) \hY (\eta_u) \indic_{\set{\eta_u < \infty}} \ud u} \\
&= \int_{\zo} \expec \bra{\hY (\eta_u) V(\eta_u) } \ud u \ = \ \int_{\zo} \expecqu \bra{ V (\eta_u) } \ud u,
\end{align*}
the second equality following from the fact that $\int_{\Real_+} \indic_{\{ \hY(t) \, \neq \, 1/I(t) \}} \ud I(t) = 0$ and the fourth by a simple time-change.
The above establishes \eqref{eq: big help} and completes the proof of Lemma \ref{lem: big help}.
\end{proof}

Continuing with the proof of Theorem \ref{thm: main}, we may assume that $S$ is actually bounded via a simple localization argument. In all that follows, fix arbitrary $s \in \zi$ and $t \in \zi$ with $s \leq t$, $B \in \F_{s}$, as well as a bounded deterministic function $f: \zo \mapsto \Real_+$. A use of the $\pi$-$\lambda$ theorem implies that, in order for the result to hold, one only needs to show that $\expec \bra{S^\rho(t) f(U(\infty)) \indic_B} = \expec \bra{S^\rho(s) f(U(\infty)) \indic_B}$. Further noticing that $\prob[U(\infty) = U(\rho)] = 1$, and using the obvious equality $S^\rho(t) f(U(\rho)) \indic_B = S^\rho (s) f(U(\rho)) \indic_B \indic_{\set{\rho \leq s}} + S^\rho(t) f(U(\rho)) \indic_{B} \indic_{\set{\rho > s }}$, one needs to establish
\begin{equation} \label{eq: thing to show}
\expec \bra{S^\rho(t) f(U(\rho)) \indic_B \indic_{\set{\rho > s }}} = \expec \bra{S^\rho(s) f(U(\rho)) \indic_B \indic_{\set{\rho > s }}}
\end{equation}

Since $S$ is assumed bounded, Remark \ref{rem: lmm} implies that $S^{\eta_u}$ is a martingale on $\basisqu$ for all $u \in \zo$. Observe that the process $V \dfn S^t f(U) \indic_{B} \indic_{\dbraoo{s, \infty}}$ is optional on $\basis$; furthermore, $V(\rho) = S^\rho(t) f(U(\rho)) \indic_{B} \indic_{\set{\rho > s}}$. Therefore, from Lemma \ref{lem: big help}, recalling that $\qprobu[U (\eta_u) = u]$ for all $u \in \zo$, we obtain 
\begin{align*}
\expec \bra{S^\rho(t) f(U(\rho)) \indic_{B} \indic_{\set{\rho > s}}} &= \int_{\zo} f(u)  \expecqu \bra{S^{\eta_u} (t) \indic_B \indic_{\set{\eta_u > s}} } \ud u \\
&= \int_{\zo} f(u)  \expecqu \bra{S^{\eta_u} (s) \indic_B \indic_{\set{\eta_u > s}} } \ud u  \ = \ \expec \bra{S^\rho(s) f(U(\rho)) \indic_{B} \indic_{\set{\rho > s}}},
\end{align*}
which is exactly \eqref{eq: thing to show} and completes the proof of Theorem \ref{thm: main}.

\subsection{Proof of Theorem \ref{thm: partial conv}}

To begin with, note that $\basisp$ supports only continuous local martingales. Indeed, otherwise there would exist a nontrivial strictly positive process $N$ with $N(0) = 1$, such that $N$ is a purely discontinuous local martingale on $\basisp$; but then, $N \hY$ would be a strictly positive local martingale deflator in the market, which contradicts the uniqueness of the strictly positive local martingale deflator $\hY$.

Since all local martingales on $\basis$ are continuous and $\phi$ is an honest time that avoids all stopping times on $\basisp$, \cite[Theorem 4.1]{MR2247846} implies that $\phi$ is the time of overall maximum of a nonnegative continuous local martingale $L$ on $\basisp$ with $L(0) = 1$ and $\prob \bra{\limt L(t) = 0} = 1$. We shall show below that $L = \hY$; this shows at the same time that $\phi = \rho$ and that the discounting process is asymptotically suboptimal, the latter following from $\prob \bra{\limt L(t) = 0} = 1$.

As in the proof of Theorem \ref{thm: main}, with $L$ replacing $\hY$ and $\phi$ replacing $\rho$, for all $u \in \zo$ define
$\eta_u \dfn \inf \set{t \in \Real_+ \such L(t) = 1/(1 - u)}$ and
$\qprobu$ via $\ud \qprobu = L (\eta_u) \ud \prob = \pare{1/ (1 - u)}
\indic_{\set{\eta_u < \infty}}$. Define the nondecreasing processes $L^* \dfn \sup_{t \in [0, \cdot]} L(t)$ and $K \dfn 1 - 1/L^*$. Following the reasoning of Lemma \ref{lem: big help} (replacing $\hY$ and $U$ there by $L$ and $K$ respectively --- note that in the proof of Lemma \ref{lem: big help}, we only use the facts that $\hY$ is a nonnegative continuous local martingale on $\basisp$ with $\hY(0) = 1$ and $\prob \big[ \limt \hY(t) = 0 \big] = 1$, properties that $\hY$ shares with $L$), we obtain
\begin{equation} \label{eq: big help 2}
\expec \bra{V(\phi)} = \expec \bra{ \int_{\Real_+} V(t) L(t) \ud K(t)},
\end{equation}
holding for all nonnegative optional process $V$ on $\basis$.

\begin{lem} \label{lem: big help 2}
For uniformly bounded $X \in \X_\bF$, we have
\begin{equation} \label{eq: key 2}
\expecqu \bra{\int_{0}^{\eta_u} (1 - K(t)) \ud X(t)} \leq 0, \text{ for
all } u \in \zo.
\end{equation}
\end{lem}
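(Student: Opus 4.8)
The plan is to exploit the hypothesis $\expec[X(\phi)] \leq X(0)$ together with the formula \eqref{eq: big help 2}, applied not to $X$ itself but to suitable optional processes built from $X$, in order to extract a pointwise-in-$u$ inequality. First I would observe that, since $\phi = \rho^L$ is the time of overall maximum of $L$ and $K = 1 - 1/L^*$ is constant exactly off the set where $L$ attains a new maximum, we have the identity $\int_0^{\eta_u}(1-K(t))\,\ud X(t) = \int_{\Real_+} (1/L^*(t))\, \indic_{\braco{0, \eta_u}}(t)\,\ud X(t)$, and more to the point, integrating by parts and using that $1-K$ is continuous and of finite variation, the quantity $\expecqu[\int_0^{\eta_u}(1-K(t))\,\ud X(t)]$ can be related to $\expecqu[(1-K(\eta_u))X(\eta_u) - X(0)] - \expecqu[\int_0^{\eta_u} X(t)\,\ud(1-K)(t)]$; since $\qprobu$ is an (absolutely continuous) local martingale measure for $S^{\eta_u}$ and $X$ is a bounded wealth process, $X^{\eta_u}$ is a $\qprobu$-martingale, so $\expecqu[X(\eta_u)] = X(0)$, and on $\set{\eta_u < \infty}$ we have $L^*(\eta_u) = 1/(1-u)$, hence $1-K(\eta_u) = 1-u$. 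This reduces the claim to an inequality purely about the nondecreasing process $K$ and the $\qprobu$-distribution of $X(\eta_u)$ and the path of $X$ before $\eta_u$.

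Next I would integrate the reduced inequality in $u$ against Lebesgue measure on $\zo$, so that \eqref{eq: big help 2} (with $V(t) = (1-K(t))$ times an appropriate indicator, or more directly $V$ the optional integrand coming from $X$) converts $\int_{\zo}\expecqu[\,\cdot\,]\,\ud u$ into an expectation under $\prob$ of an integral against $\ud K$. Concretely, applying \eqref{eq: big help 2} with $V(t) = X(t)$ gives $\expec[X(\phi)] = \expec[\int_{\Real_+} X(t) L(t)\,\ud K(t)]$, and the change of variables $\ud K(t) = (1/(1-u))^2$-type Jacobian along the level sets $\set{\eta_u}$ (exactly the time-change used in the proof of Lemma \ref{lem: big help}) rewrites this as $\int_{\zo}\expecqu[X(\eta_u)]\,\ud u$ after inserting the Radon--Nikodym density $L(\eta_u) = 1/(1-u)$. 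Combining $\expec[X(\phi)]\leq X(0)$ with the martingale identity $\expecqu[X(\eta_u)] = X(0)$ then shows that the averaged version of \eqref{eq: key 2} holds with equality being the borderline case; the monotonicity of $K$ (so that $1-K(t)\leq 1-K(\eta_u) = 1-u$ for $t\leq\eta_u$) forces each integrand to have the correct sign, and one upgrades the integrated-in-$u$ statement to the pointwise-in-$u$ statement \eqref{eq: key 2} by a standard argument: replace $X$ by $X$ stopped at $\eta_v$ for arbitrary $v$, or localize the $u$-integral to a small interval, using that $u\mapsto \expecqu[\int_0^{\eta_u}(1-K(t))\,\ud X(t)]$ inherits enough regularity (right-continuity/monotonicity in $u$) to conclude.

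The main obstacle I anticipate is the bookkeeping in the integration-by-parts and time-change step: making rigorous that $\expecqu[\int_0^{\eta_u}(1-K(t))\,\ud X(t)]$ is nonpositive requires carefully handling the stochastic integral $\int_0^\cdot (1-K(t))\,\ud X(t)$ under the measure $\qprobu$ (it is a $\qprobu$-local martingale since $(1-K)$ is bounded predictable and $X^{\eta_u}$ is a $\qprobu$-martingale, but boundedness/uniform integrability must be checked to pass from local martingale to the desired inequality) and, crucially, keeping track of the finite-variation correction term $\int_0^{\eta_u} X(t)\,\ud(1-K)(t) = -\int_0^{\eta_u} X(t)\,\ud K(t)$, whose sign is what ultimately delivers \eqref{eq: key 2}. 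The passage from the $u$-integrated inequality back to the pointwise one is the second delicate point, but it is routine once the monotone structure in $u$ is isolated. Everything else --- the identity for $1-K(\eta_u)$, the fact that $X^{\eta_u}$ is a $\qprobu$-martingale, and the reuse of Lemma \ref{lem: big help}'s computation --- is mechanical given the results already established.
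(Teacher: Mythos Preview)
Your proposal contains a circular argument. You assert that ``$\qprobu$ is an (absolutely continuous) local martingale measure for $S^{\eta_u}$'' and hence that $X^{\eta_u}$ is a $\qprobu$-martingale with $\expecqu[X(\eta_u)] = X(0)$. But in the setting of Theorem \ref{thm: partial conv}, the measures $\qprobu$ are defined via $\ud\qprobu = L(\eta_u)\,\ud\prob$, where $L$ is merely the continuous local martingale (furnished by \cite[Theorem 4.1]{MR2247846}) whose time of maximum equals $\phi$. At this stage it is \emph{not} known that $L$ is a local martingale deflator --- that is precisely what Lemma \ref{lem: big help 2} is being used to establish. Indeed, immediately after the lemma the paper applies \eqref{eq: key 2} to $\pm$ truncations of the coordinates of $S$ to deduce that each $(S^i)^{\eta_u}$ is a $\qprobu$-local martingale, and only then concludes $L = \hY$. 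So the identity $\expecqu[X(\eta_u)] = X(0)$, on which your whole reduction rests, is unavailable.

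The paper's argument avoids this trap. For fixed $u$ it applies the hypothesis $\expec[\,\cdot\,(\phi)] \leq (\,\cdot\,)(0)$ not to $X$ but to the stopped process $X^{\eta_u}$ (still a bounded element of $\X_\bF$), and evaluates $\expec[X^{\eta_u}(\phi)]$ via \eqref{eq: big help 2}. Splitting $\int_{\Real_+} X^{\eta_u}(t)\,L(t)\,\ud K(t)$ at $\eta_u$, the post-$\eta_u$ piece becomes $X(\eta_u)\int_{\eta_u}^{\infty}(1/L^*(t))\,\ud L^*(t)$; Doob's maximal identity gives $\expec[\log L^*(\infty)+\log(1-u)\mid\F(\eta_u)] = 1$ on $\{\eta_u<\infty\}$, so this term contributes $(1-u)\,\expecqu[X(\eta_u)]$. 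The pre-$\eta_u$ piece, after integration by parts against the bounded $\prob$-martingale $L^{\eta_u}$, contributes $\expecqu\big[\int_0^{\eta_u} X(t)\,\ud K(t)\big]$. One obtains $\expecqu\big[\int_0^{\eta_u} X(t)\,\ud K(t) + (1-u)X(\eta_u)\big] \leq X(0)$ directly for each $u$, and a final integration by parts yields \eqref{eq: key 2}. No $u$-averaging and no passage from integrated to pointwise statements is needed. (Incidentally, your monotonicity claim ``$1-K(t)\leq 1-u$ for $t\leq\eta_u$'' points the wrong way: $K$ is nondecreasing, so $1-K(t)\geq 1-u$ on $[0,\eta_u]$.)
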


\begin{proof}
Let $B \dfn \int_{\zd} X(t) \ud K(t)$; clearly, $B$ is a uniformly
bounded nondecreasing continuous and adapted process on $\basis$. Fix $u \in \zo$. Using integration-by-parts, write
\begin{align*}
\int_{\Real_+} X^{\eta_u}(t) L(t) \ud K(t) &= \int_{0}^{\eta_u} L(t) \ud
B (t) + X(\eta_u) \int_{\eta_u}^\infty L(t) \ud K(t)      \\
&= \int_{0}^{\eta_u} L(t) \ud
B (t) + X(\eta_u) \int_{\eta_u}^\infty L(t) \frac{1}{\pare{L^*(t)}^2} \ud L^*(t)      \\
&= \int_{0}^{\eta_u} L(t) \ud
B (t) + X(\eta_u) \int_{\eta_u}^\infty \frac{1}{L^*(t)} \ud L^*(t)      \\
&= L (\eta_u) B (\eta_u) - \int_{0}^{\eta_u} B(t) \ud L(t) + X (\eta_u)
\pare{\log(L^* (\infty)) + \log(1-u)} \indic_{\set{\eta_u < \infty}},
\end{align*}
the third equality following from the fact that $\int_{\Real_+} \indic_{\{ L(t) \, \neq \, L^*(t) \}} \ud L^*(t) = 0$.
Now, observe that $\expec \bra{L(\eta_u) B(\eta_u)} =
\expecqu[B(\eta_u)] = \expecqu[\int_{0}^{\eta_u} X(t) \ud K(t)]$ and
$\expec \bra{\int_{0}^{\eta_u} B(t) \ud L(t)} = 0$, the latter following from the
facts that $B$ is uniformly bounded and $L^{\eta_u}$ is a uniformly
bounded martingale on $\basisp$. Furthermore, using Doob's maximal identity we obtain that
\[
\expec \bra{\log(L^*(\infty)) + \log(1-u) \such \F(\eta_u)} = 1 \text{ holds on } \set{\eta_u < \infty}.
\]
Therefore, $\expec \bra{X(\eta_u) \pare{\log(L^*(\infty)) + \log(1-u)} \indic_{\set{\eta_u < \infty}}} = \expec \bra{X(\eta_u) \indic_{\set{\eta_u < \infty}}} = (1 - u) \expecqu[X(\eta_u)]$. In view of the fact that $\expec \bra{\int_{\Real_+} X^{\eta_u}(t) L(t) \ud K(t)} =
\expec[X^{\eta_u} (\phi)] \leq X (0)$, as follows from \eqref{eq: big help 2} and the assumptions of Theorem \ref{thm: partial conv}, all the previous give
\[
\expecqu \bra{\int_{0}^{\eta_u} X(t) \ud K(t) + (1 - u) X(\eta_u)} \leq X(0).
\]
Since $\int_{0}^{\eta_u} X(t) \ud K(t) =  K(\eta_u) X(\eta_u) - \int_0^{\eta_u} K(t)
\ud X(t) = u X(\eta_u) - \int_0^{\eta_u} K(t)
\ud X(t)$ holds on $\set{\eta_u < \infty}$ and $\qprobu[\eta_u < \infty] = 1$, we furthermore obtain
\[
\expecqu \bra{X(\eta_u) - \int_{0}^{\eta_u} K(t) \ud X(t)} \leq X(0),
\]
which is the same as \eqref{eq: key 2} and proves Lemma \ref{lem: big help 2}.
\end{proof}

Continuing, for each $i \in \set{1, \ldots, d}$ and $\nin$, define $\tau_n^i \dfn \inf \set{t \in \Real_+ \such |S^i(t) - S^i(0)| \geq
n}$, which is a stopping time on $\basis$. Furthermore, define $X^i_n
\dfn 1 + n^{-1} (S^i - S^i(0))^{\tau^i_n}$ --- it is clear
that $X^i_n \in \X_\bF(1)$ and that $0 \leq X^i_n \leq 2$. For an arbitrary stopping time $\tau$ on $\basis$, apply \eqref{eq: key 2} with
$(X^i_n)^{\tau}$ replacing $X$; one then obtains $\expecqu
\bra{\int_{0}^{\eta_u \wedge \tau^i_n \wedge \tau} (1 - K(t)) \ud
S^i(t)} \leq 0$. Performing exactly the previous work by redefining
$X^i_n \dfn 1 - n^{-1} (S^i - S^i(0))^{\tau^i_n}$, one
obtains $\expecqu \bra{\int_{0}^{\eta_u \wedge \tau^i_n \wedge \tau}
(1 - K(t)) \ud S^i(t)} \geq 0$. In other words, $\expecqu
\bra{\int_{0}^{\eta_u \wedge \tau^i_n \wedge \tau} (1 - K(t)) \ud
S^i(t)} = 0$ holds for all $i \in \set{1, \ldots, d}$, $\nin$, and
any stopping time $\tau$ on $\basis$. This implies that each process
$\int_{0}^{\eta_u \wedge \cdot} (1 - K(t)) \ud S^i(t)$ is a local
martingale on $\basisqu$. Since $1 - K > 0$, we further obtain that
each process $(S^i)^{\eta_u}$ is a local martingale on $\basisqu$. By
the definition of the collection $(\qprobu)_{u \in \zo}$, we conclude that $L S^i$ is a local martingale on $\basisp$ for all $i \in \set{1, \ldots, d}$. This implies that $L$ is a local martingale deflator.
Since $1 / \hX$ is the unique local martingale deflator, we finally
conclude that $L = 1/\hX$, which proves Theorem \ref{thm: partial conv}.

\bibliographystyle{siam}
\bibliography{rand_times}
\end{document}